%
%
%
%
%
%
%
%
%
%
\documentclass[conference]{IEEEtran}

\hyphenation{op-tical net-works semi-conduc-tor}

\usepackage{array,cite}
\usepackage{graphicx,epsfig,subfigure}
\usepackage{amsthm}
\usepackage{amsmath}
\usepackage{mdwmath}
\usepackage{array}
\usepackage{subeqnarray}
\usepackage{stfloats}
\usepackage{algorithm,algorithmic}
\usepackage{xcolor}

\newtheorem{lemma}{\emph{Lemma}}
\newtheorem{theorem}{Theorem}

\IEEEoverridecommandlockouts

\begin{document}

\title{Faster Information Propagation on Highways: a Virtual MIMO Approach}

\author{\authorblockN{Hui~Wu, Zhaoyang~Zhang\IEEEauthorrefmark{2} and Huazi~Zhang\\
Department of Information Science and Electronic Engineering, Zhejiang University, Hangzhou 310027, China\\
Zhejiang Provincial Key Laboratory of Information Network, Zhejiang, China\\
E-mail: 3090102503@zju.edu.cn, \IEEEauthorrefmark{2}ning\_ming@zju.edu.cn, hzhang17@zju.edu.cn}
\thanks{This work was supported in part by National Key Basic Research Program of China (No. 2012CB316104), National Hi-Tech R\&D Program (No.2014AA01A702), Zhejiang Provincial Natural Science Foundation of China (No. LR12F01002), National Natural Science Foundation of China(61371094), National Natural Science Foundation of China (61401388) and the China Postdoctoral Science Foundation Funded Project (No. 2014M551736).}}

\maketitle

\begin{abstract}
In vehicular communications, traffic-related information should be spread over the network as quickly as possible to maintain a safer transportation system. This motivates us to develop more efficient information propagation schemes. In this paper, we propose a novel cluster-based cooperative information forwarding scheme, in which the vehicles opportunistically form virtual antenna arrays to boost one-hop transmission range and therefore accelerate information propagation along the highway. Both closed-form results of the transmission range gain and the improved \emph{Information Propagation Speed} (IPS) are derived and verified by simulations. It is observed that the proposed scheme demonstrates the most significant IPS gain in moderate traffic scenarios, whereas too dense or too sparse vehicle density results in less gain. Moreover, it is also shown that increased mobility offers more contact opportunities and thus facilitates information propagation.
\end{abstract}
\section{Introduction}
\subsection{Motivation}
Vehicular ad hoc networks (VANET) is undergoing extensive study in recent years \cite{model1,model2,model3,add1,add2}. Maintaining a safer transportation system is of top priority. One safety measure is to enable inter-vehicle communication, especially in highways without any fixed road-side infrastructure. Information on traffic-related events, including accident, traffic jam, closed road, etc, needs to be passed on to nearby vehicles in a multi-hop relaying fashion. For instance, if an accident occurs on a highway and induces temporary congestion, it is vital to ensure that all vehicles in this region be informed as quickly as possible, so that some of them can make early detours before getting trapped. Therefore, we are interested in how fast a message can propagate along the route. A new performance metric, \emph{Information Propagation Speed} (IPS) \cite{varyv2}, defined as the distance the information spreads within unit time, is drawing increasing attention.
\subsection{Related Work}
Information spreading in mobile networks has been theoretically studied in \cite{HZ1,HZ2}. Its application in vehicular DTNs has been well studied in literature. Early works \cite{model1,model2,model3,model4,model5} focus on the modeling of vehicular DTNs. In \cite{model4}, the authors studied vehicle traces and concluded that vehicles are very close to being exponentially distributed in highways. Further, measurements in \cite{model5} showed that vehicles traveling in different lanes (e.g. bus lane or heavy truck lane) have different speed distributions.

However, further theoretical analysis of the IPS remained missing until Agarwal et al. first obtained upper and lower bounds for the IPS in a 1D VANET, where vehicles are Poissonly distributed and move at the same, constant speed in opposing directions \cite{upblb1,upblb2}. Their results revealed the impact of vehicle density, indicating a phase transition phenomenon. Under the same setting of \cite{upblb2}, Baccelli et al. derived more fine-grained results of the threshold \cite{exactphasetransi}. The authors complemented their work by taking into account radio communication range variations at the MAC layer, and characterized conditions for the phase transition \cite{varyradiorange}.

Other than traffic density, vehicle speed properties have also been extensively investigated. In \cite{wu}, Wu et al. studied the IPS assuming uniformly distributed, but time-invariant vehicle speeds, and obtained analytical results in extreme sparse and extreme dense cases. \cite{varyv1} showed that the time-variation of vehicle speed will also impose a significant impact on the IPS. When extended to a multi-lane scenario, \cite{varyv2} obtained similar results. However, we also note that, most above-cited works are based on simplified wireless communication models that ignore transmission time. Moreover, each vehicle plays the role of an individual relay, and no inter-vehicle cooperation has been considered.
\subsection{Summary of Contributions}
In this paper, we study the information propagation in a bidirectional highway scenario. We propose a cooperative information forwarding scheme and analyze the corresponding IPS. We show that the proposed scheme yields faster information propagation than conventional non-cooperative schemes.
Our main contributions are summarized below.
\begin{enumerate}
\item{In contrast to the existing highway information propagation models that only consider network layer, we establish a cross-layer framework that features wireless channel dynamics, retransmission delay and enables cooperative transmission. In this sense, our model is more realistic.}
\item{We propose a novel cluster-based information forwarding scheme, in which adjacent vehicles form a distributed antenna array to collaboratively transmit signals. The scheme effectively boosts one-hop transmission range which results in a significant IPS gain. More specifically, we obtain closed-form results of the transmission range gain and the improved IPS.}
\item{We analyze the impact of vehicle speed, traffic density and transmission range on IPS. Additionally, the proposed cooperative scheme demonstrates the most significant IPS gain under moderate traffic, whereas too dense or too sparse vehicle density results in less IPS gain.}
\end{enumerate}

The rest of the paper is organized as follows - In Section \ref{systemmodel}, we present the system model. In Section \ref{rangegainsec}, the transmission range gain from the proposed scheme is evaluated. Closed-form expression of IPS will be derived in Section \ref{IPS}. Simulations results are given in Section \ref{simulation}. The paper is finally concluded in Section \ref{conclusion}.

Throughout the rest of the paper, let $[\textbf{V}]_{M\times N}$ mean that the matrix $\textbf{V}$ is of M rows and N columns, the capital bold style means it is a matrix and the lowercase bold style means it is a vector. $\textbf{E}(\cdot)$ is the mathematical expectation operator and the $\textbf{V}^T$ denotes the transpose.
\section{System Model}
\label{systemmodel}
\subsection{Cluster-based Opportunistic Forwarding}
We focus on a highway scenario where vehicles are Poissonly distributed with intensity $\lambda$, and travel in opposing directions at a constant speed $v$. A cluster is defined as a maximal sequence of exclusive eastbound (or equivalently, westbound) cars such that any two consecutive cars are within each other's radio range. Without loss of generality, we focus on the information propagation speed in the eastbound lane.

Fig. \ref{propagation pattern} shows the propagation process of a certain packet: it moves toward east from cluster to cluster. We say the packet forwarding is ¡°blocked¡± (see Fig. \ref{propagation pattern_a}) when the transmitter cluster senses that the next eastbound cluster is out of reach. Hence, the packet has to be buffered in the current cluster until the gap is bridged by the opposing traffic (see Fig. \ref{propagation pattern_b}).
\begin{figure}[!h]
\centering
\subfigure[]{\includegraphics[angle=0,scale=0.33]{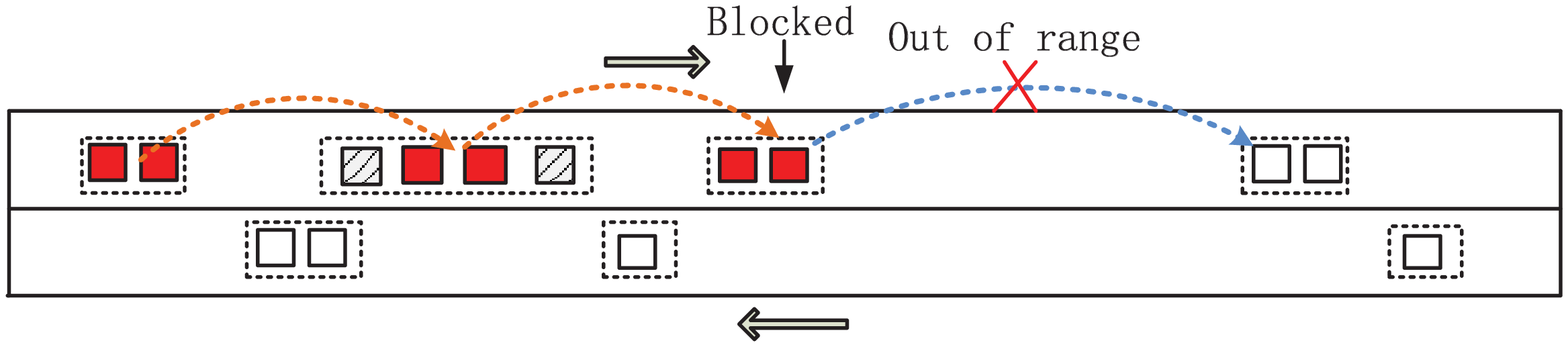}\label{propagation pattern_a}}
\subfigure[]{\includegraphics[angle=0,scale=0.33]{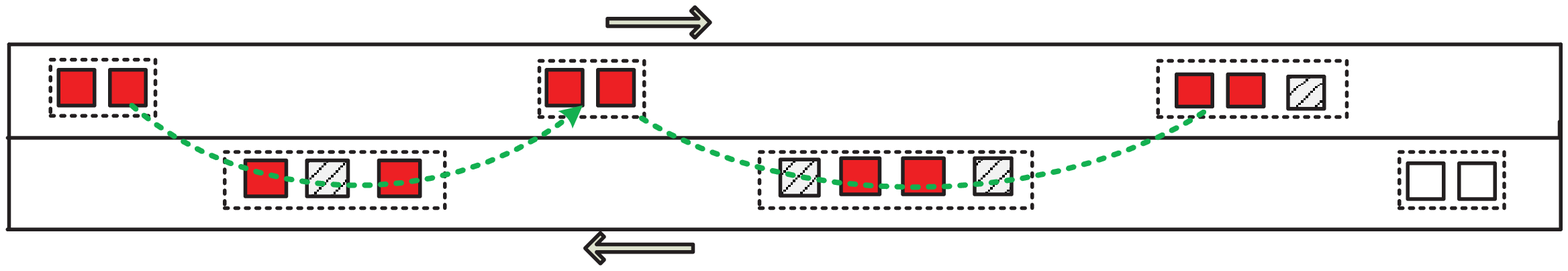}\label{propagation pattern_b}}
\caption{Propagation pattern of a certain packet}
\label{propagation pattern}
\end{figure}
\subsection{Cooperative Transmission Model}
Now we will describe the transmission between two clusters, namely a transmitter cluster (TxC) with $N_t$ ($N_t \geq 2$\footnote{Note $N_t = 1$ is considered as a special case}) vehicles and a receiver cluster (RxC) with $N_r$ vehicles. For ease of analysis, only two randomly-chosen vehicles\footnote{More cooperating nodes only accelerate information spreading.} (denoted by filled squares $T_{x1}$ and $T_{x2}$) in TxC are allowed to transmit.
The cluster-based transmission model is illustrated in Fig. \ref{cluster2cluster}.
\begin{figure}[!h]
\centering
\includegraphics[angle=0,scale=0.5]{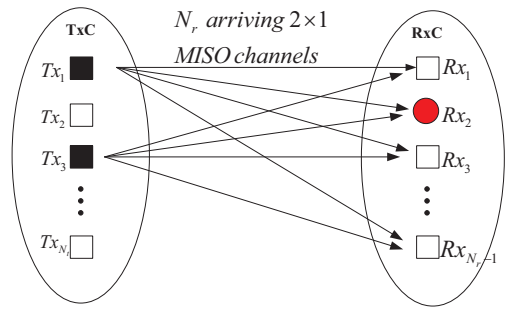}
\caption{Cluster-based cooperative transmission model}
\label{cluster2cluster}
\end{figure}
\subsubsection{Channel Characterization}
The virtual MIMO channel between two transmitters and $N_r$ receivers can be represented as:
\begin{equation}
\textbf{H}=\left[\textbf{D}_1^{\frac{1}{2}}\textbf{s}_1 \quad \textbf{D}_2^{\frac{1}{2}}\textbf{s}_2\right]_{N_r\times2},
\end{equation}
where $\left[\textbf{s}_1\right]_{N_r\times 1}$, $\left[\textbf{s}_2\right]_{N_r\times 1}$ denote the small-scale fading between the RxC and $T_{x1}$ and $T_{x2}$, respectively; $\left[\textbf{D}_1^{\frac{1}{2}}\right]_{N_r\times N_r}$, $\left[\textbf{D}_2^{\frac{1}{2}}\right]_{N_r\times N_r}$ are diagonal matrices accounting for the large scale fading effect.  For mathematical tractability, we approximately assume that the large scale fading of each link is almost the same, thus can be defined as $D^{-\delta}$, here $\delta$ is the path loss exponent with typical value from 2 to 4. Furthermore, we assume that the small-scale fading follows Rician distribution, which is typical for a flat terrain. Thus $\left[\textbf{s}_1\right]_{N_r\times1}$ and $\left[\textbf{s}_2\right]_{N_r \times1}$ can be denoted as follows \cite{rician1}:
\begin{equation}
\begin{aligned}
\textbf{s}_1(\theta_1)&=\sqrt{\frac{K}{K+1}}\textbf{v}(\theta_1)+\sqrt{\frac{1}{K+1}}\textbf{g}_1, \\
\textbf{s}_2(\theta_2)&=\sqrt{\frac{K}{K+1}}\textbf{v}(\theta_2)+\sqrt{\frac{1}{K+1}}\textbf{g}_2
\end{aligned}
\end{equation}
where $K$ is the Rician factor, $\theta_1$ and $\theta_2$ are the Angles of Arrivals (AoAs) of signals from $T_{x1}$ and $T_{x2}$, respectively. Note that in our one-dimensional scenario, $\theta_1{=}\theta_2{=}\theta{=}\pi/2$, thus function $v(\theta)$ is represented as
\begin{equation}
\small
v(\theta){=}\left[1,{...}\exp\left({-}jk\frac{2\pi d_k}{\lambda}\right),{...},\exp\left({-}j(N_r{-}1)\frac{2\pi d_{N_r}}{\lambda}\right) \right]^T,
\end{equation}
and $[g_1]_{N_r\times 1}$, $[g_2]_{N_r \times 1} \sim {\cal{CN}}(\textbf{0},\textbf{I})$ denote the Rician scattering vector.
\subsubsection{Opportunistic Retransmission}
Given the distributed nature of receivers, we adopt a simplified selection diversity \cite{wireless} algorithm at RxC, over the set of $N_r$ MISO channels, i.e., the vehicle with the highest received signal-to-noise ratio (SNR) (denoted by the red-filled circle $R_{x2}$) acts as a coordinator and broadcasts the decoded packet within RxC immediately. We also adopt an ACK-based protocol, so that the coordinator returns a confirmation message (ACK) to TxC on successfully receiving the packet, otherwise the transmitters will keep retransmitting every $\tau$ seconds.
\section{Transmission Range Gain Analysis}
\label{rangegainsec}
In this section, we will analyze the transmission range gain from the proposed cooperative transmission scheme. Intuitively, power gain, provided by jointly transmitting, together with diversity gain, provided by receiving over independent fading channels, can be leveraged to boost the transmission range.

Assume there exists some minimum receive SNR which can be translated to a minimum received power $P_{min}$ for acceptable performance. Providing the target outage probability $P_{out}$, let $r$ represent the maximal one-hop transmission range of a single vehicle with transmit power $P_t$. Based on the flat Rician fading channel model, the received signal on the edge is:
\begin{equation}
y=r^{-\delta/2}\left(\sqrt{\frac{K}{K+1}}+\sqrt{\frac{1}{K+1}}\psi_0\right)x+n_0,
\end{equation}
where $\psi_0\sim$ $\cal{CN}$ $(0,1)$, and $n_0$ denotes the addictive white Gaussian noise with power spectral density $N_0$. Apparently,
\begin{equation}
P_{out}=P(P_0r^{-\delta}|\psi|^2<P_{min}),
\label{Pout}
\end{equation}
where $P_0=\frac{K P_t}{K{+}1}$ denotes the normalized transmit power and $\psi{=}1{+}\sqrt{\frac{1}{K}}\psi_0$.

\begin{lemma}
\label{ncxnormal}
$X$ is an non-central $\chi^2$ distributed random variable with $f$ degrees of freedom and non-central parameter $\lambda$, $(\frac{X}{r})^{\frac{1}{3}}$ is
approximately normally distributed with mean $1-\frac{2}{9}\frac{1+b}{r}$ and variance $\frac{2}{9}\frac{1+b}{r}$, where $r=f+\lambda$ and
$b=\frac{\lambda}{f+\lambda}$.
\end{lemma}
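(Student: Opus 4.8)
The plan is to recognize this as a non-central generalization of the classical Wilson--Hilferty cube-root normalizing transformation and to establish it by propagating the low-order moments of $X$ through the map $w \mapsto w^{1/3}$ via the delta method.

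First I would record the moments of the non-central $\chi^2$. With $\mu = \textbf{E}(X) = f + \lambda = r$, the variance is $\mathrm{Var}(X) = 2(f + 2\lambda) = 2r(1+b)$ and the third central moment is $8(f+3\lambda) = 8r(1+2b)$, where $b = \lambda/r$. Normalizing by $r$, set $W = X/r$, so that $\textbf{E}(W) = 1$, $\mathrm{Var}(W) = 2(1+b)/r$, and $W$ concentrates around $1$ as $r = f+\lambda$ grows. This concentration is what legitimizes a Taylor expansion about the mean.

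Next I would Taylor-expand $g(w) = w^{1/3}$ about $w = 1$, using $g(1) = 1$, $g'(1) = 1/3$ and $g''(1) = -2/9$. A second-order expansion of the mean gives $\textbf{E}(g(W)) \approx 1 + \tfrac{1}{2}g''(1)\mathrm{Var}(W) = 1 - \tfrac{2}{9}\tfrac{1+b}{r}$, while the first-order delta method gives $\mathrm{Var}(g(W)) \approx (g'(1))^2 \mathrm{Var}(W) = \tfrac{2}{9}\tfrac{1+b}{r}$, which are exactly the claimed mean and variance. These two steps are routine moment propagation.

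The main obstacle is justifying the approximate normality, since the delta method alone only reproduces the first two moments. Here I would exploit the defining feature of the cube root: computing the third central moment of $g(W)$ and showing that the exponent $1/3$ suppresses the leading skewness. Carrying the expansion one order further, the leading skewness of $X$ (of order $r^{-1/2}$ with coefficient proportional to $1+2b$) is cancelled so that the residual skewness of $(X/r)^{1/3}$ carries a coefficient of order $b^2$, vanishing entirely in the central case $b=0$. Combined with the fact that a non-central $\chi^2$ is a sum of independent squared Gaussians, so that the standardized variable is asymptotically normal by the central limit theorem as $r \to \infty$, this skewness cancellation is what makes the Gaussian approximation accurate and singles out $1/3$ as the correct normalizing power. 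I would therefore present the mean and variance via the delta method and support the normality claim by this skewness-suppression argument, equivalently invoking the Wilson--Hilferty approximation.
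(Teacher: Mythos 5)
Your proposal is correct, but it is worth noting that the paper does not actually prove this lemma at all: its entire proof is the single line ``See \cite{ncx}'', deferring to Abdel-Aty's Biometrika paper where this cube-root (Wilson--Hilferty-type) approximation for the non-central $\chi^2$ was originally derived. Your derivation is therefore a genuinely different route --- a self-contained reconstruction rather than an appeal to the literature. Your computations check out: with $r=f+\lambda$ and $b=\lambda/r$ one indeed has $\mathrm{Var}(X)=2(f+2\lambda)=2r(1+b)$, and pushing $W=X/r$ through $g(w)=w^{1/3}$ (using $g'(1)=1/3$, $g''(1)=-2/9$) gives exactly the stated mean $1-\tfrac{2}{9}\tfrac{1+b}{r}$ and variance $\tfrac{2}{9}\tfrac{1+b}{r}$. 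Your skewness-suppression argument is also sound: writing $Z=W-1$, the leading third central moment of $g(W)$ is $\tfrac{1}{27}\bigl(\mathbf{E}(Z^3)-2(\mathrm{Var}(Z))^2\bigr) = \tfrac{1}{27}\bigl(\tfrac{8(1+2b)}{r^2}-\tfrac{8(1+b)^2}{r^2}\bigr) = -\tfrac{8b^2}{27r^2}$, so the coefficient $1+2b$ of the raw skewness collapses to order $b^2$ and vanishes in the central case --- precisely what singles out the exponent $1/3$. What each approach buys: the paper's citation is the pragmatic route, since the approximation is classical and its error analysis lives in the cited source; your approach buys insight and self-containedness, though as a rigorous proof it would still need quantification --- the delta method and CLT give asymptotic statements as $r\to\infty$, not a bound justifying ``approximately normally distributed'' at finite $r$, and the approximation quality degrades as $b$ grows, which your order-$b^2$ residual correctly signals but does not bound.
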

\IEEEproof See \cite{ncx}.

According to Lemma \ref{ncxnormal}, we get
\begin{equation}
\label{approx}
\left(\frac{|\psi|^2}{\sigma^2 \left(2+\frac{1}{\sigma^2}\right)}\right)^{\frac{1}{3}} \sim
{\cal{N}}\left(M_1,V_1\right),
\end{equation}
where $M_1=1-\frac{4\sigma^2\left(\sigma^2+1\right)}{9\left(2\sigma^2+1\right)^2}$, $V_1=\frac{4\sigma^2\left(\sigma^2+1\right)}{9\left(2\sigma^2+1\right)^2}$ and $\sigma=\sqrt{\frac{1}{K}}$.

By combining \eqref{Pout} and \eqref{approx}, we easily obtain
\begin{equation}
\label{r}
r=\left[\frac{P_0\sigma^2\left(2+\frac{1}{\sigma^2}\right)
\left(\sqrt{V_1}\Phi^{-1}\left(P_{out}\right)+M_1\right)^3}
{P_{min}}\right]^{\frac{1}{\delta}}.
\end{equation}

In terms of the cooperative scheme, with each vehicle transmitting the same signal with identical transmit power $P_t$, the received signal can be represented as:
\begin{equation}
\textbf{y}=\left[\textbf{D}_1^{\frac{1}{2}}\textbf{s}_1 \quad\textbf{D}_2^{\frac{1}{2}}\textbf{s}_2\right]
\left[\begin{array}{l}
x\\
x\\
\end{array} \right]+\textbf{n},
\end{equation}
where $\textbf{E}(x^2)=P_t$ and $[\textbf{n}]_{N_r\times 2} \sim {\cal{CN}}(0,\sigma_N^2 \textbf{I})$ is the complex white Gaussian noise vector.

We assume perfect phase compensation at RxC, so the output SNR, based on the selection diversity algorithm, is:
\begin{equation}
\gamma^{SC}=\frac{P_t}{N_0} \mathop{\max}\limits_{k}|\textbf{h}_k|^2,
\end{equation}
where $\textbf{h}_k$ denotes the $k$th row of $\textbf{H}$.

Let $R$ be the expanded transmission range, referred as MIMO range for brevity. Given the same target outage probability on the edge, thus we have
\begin{equation}
\begin{aligned}
P_{out}&=P\left(\gamma^{SC}<\frac{P_{min}}{N_0}\right)\\
       &=\prod\limits_{k{=}1}^{N_r} {P\left({P_0}{R^{-\delta}}\left(|\psi_{k1}|^2+|\psi_{k2}|^2\right){<}P_{min}\right)} \\
       &=\left[P\left(|\psi_{k1}|^2+|\psi_{k2}|^2{<}{\frac{P_{min}}{P_0 R^{-\delta}}}\right)\right]^{N_r}
\end{aligned}
\end{equation}
Note that the second and the third equality follow from the cumulative density function of the maximum of i.i.d random variables.
Similarly, we have $\frac{|\phi_{k1}|^2+|\phi_{k2}|^2}{\sigma^2} \sim \chi^{'2}_4(\frac{2}{\sigma^2})$, which gives:
\begin{equation}
R=\left[\frac{2P_0\sigma^2\left(2+\frac{1}{\sigma^2}\right)\left(\sqrt{V_2}\Phi^{-1}\left(P_{out}^{\frac{1}{N_r}}\right)+M_2\right)^3}{P_{min}}\right]^{\frac{1}{\delta}},
\end{equation}
where $M_2=1-\frac{2\sigma^2\left(\sigma^2+1\right)}{9\left(2\sigma^2+1\right)^2}$ and
$V_2=\frac{2\sigma^2\left(\sigma^2+1\right)}{9\left(2\sigma^2+1\right)^2}$.
By taking the ratio of $R$ and $r$, the transmission range gain is:
\begin{equation}
\label{rangegain}
g=[\frac{2(\sqrt{V_2}\Phi^{-1}(P_{out}^{\frac{1}{N_r}})+M_2)^3}{(\sqrt{V_1}\Phi^{-1}(P_{out})+M_1)^3}]^{\frac{1}{\delta}}.
\end{equation}

Since $P_{out}$, $\sigma$ are predetermined constants, $N_r$ is left to exclusively affect $g$. Equivalently, we can write $g$ as a function of $N_r$:
\begin{equation}
\label{g}
g \buildrel \Delta \over = F(N_r).
\end{equation}

Fig. \ref{rangegainfig} shows the relationship between $g$ and $N_r$, with $N_r$ plotted in log-fashion. It can be seen that the range gain grows logarithmically with the number of cooperating vehicles (or cluster size). This implies a few additional cooperating vehicles would be sufficient, and too many cooperators would only incur excessive overhead.
\begin{figure}[!ht]
\centering
\includegraphics[angle=0,scale=0.33]{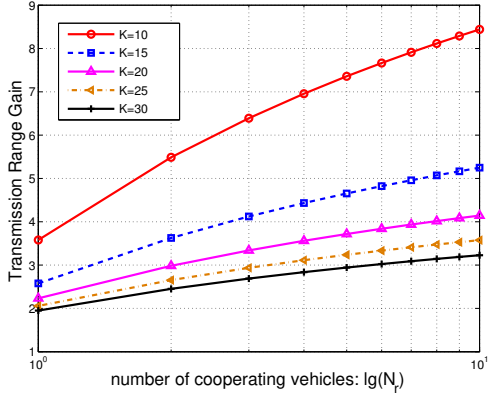}
\caption{Transmission range gain as number of cooperating vehicles increases}
\label{rangegainfig}
\end{figure}
\section{Information Propagation Speed Analysis}
\label{IPS}
The technical building blocks for analyzing IPS is organized as follows. \ref{numbercluster} sutdies the distribution of cluster size, i.e., the number of vehicles in a cluster. \ref{roadlengthgapsub} analyzes the average road length ahead until the gap is bridged by the next cluster. In \ref{unbrigapsub}, we investigate the distribution of an unbridged gap length. Based on the results in \ref{numbercluster} to \ref{unbrigapsub}, the expectation of blocking time is obtained in \ref{subsecT}. Meanwhile, the expected distance that a piece of information can propagate after the block is given in \ref{Dsubsec}. Having obtained these key elements, we get the final expression of the IPS in \ref{IPSsubsec}.
\subsection{Number of Vehicles Within a Cluster}
\label{numbercluster}
\begin{lemma}
\label{vehiclenumberdistri}
The probability mass function (pmf) of the number of vehicles within a cluster is given by:
\begin{equation}
\label{PNk}
P_N(k)=e^{-\lambda r}(1-e^{-\lambda r})^{k-1} ~~~~~ (k=1,2,3\ldots)
\end{equation}
and its cumulative distribution function (CDF) of is:
\begin{equation}
F_N(n)=P(N \leq n)=1-(1-e^{-\lambda r})^n,
\end{equation}
where $\lambda$ is the poisson intensity.
\end{lemma}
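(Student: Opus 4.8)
The plan is to exploit the defining property of a homogeneous Poisson process on the line: the spacings between consecutive vehicles are independent and identically distributed exponential random variables with rate $\lambda$. First I would fix notation by letting $G_1, G_2, \ldots$ denote the successive inter-vehicle gaps measured forward from the leading (say, westmost) vehicle of a cluster, so that each satisfies $P(G_i > x) = e^{-\lambda x}$. The radio-range condition in the cluster definition says that two consecutive vehicles belong to the same cluster precisely when the gap between them does not exceed $r$; hence a gap \emph{breaks} the cluster exactly when $G_i > r$, an event of probability $e^{-\lambda r}$, while it is \emph{internal} with probability $1 - e^{-\lambda r}$.

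The core observation is then a reduction to a sequence of independent Bernoulli trials. A cluster contains exactly $k$ vehicles if and only if its first $k-1$ gaps are internal and its $k$-th gap is a break. Because the $G_i$ are independent, these events multiply, giving
\begin{equation}
P_N(k) = \underbrace{(1-e^{-\lambda r})^{k-1}}_{k-1 \text{ internal gaps}} \cdot \underbrace{e^{-\lambda r}}_{\text{one break}},
\end{equation}
which is the claimed pmf and is recognized as a geometric distribution with break probability $p = e^{-\lambda r}$. The CDF then follows immediately by summing the finite geometric series,
\begin{equation}
F_N(n) = \sum_{k=1}^{n} e^{-\lambda r}(1-e^{-\lambda r})^{k-1} = 1 - (1-e^{-\lambda r})^n.
\end{equation}

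The only point requiring care, and the one I expect to be the main obstacle, is justifying that the cluster-size distribution is well defined independently of which vehicle we designate as the cluster's leader. Here the memorylessness of the exponential distribution is essential: conditioned on a vehicle starting a fresh cluster (equivalently, on the gap to the vehicle behind it exceeding $r$), the forward gaps remain i.i.d. exponential with rate $\lambda$ and independent of the conditioning event, so that no size-biasing or edge correction enters. I would make this conditioning explicit to confirm that the geometric law above is indeed the pmf of the number of vehicles in a generic cluster, completing the argument.
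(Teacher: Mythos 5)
Your proof is correct and is precisely the standard argument the paper has in mind when it dismisses this as ``a straightforward result in probability theory'': i.i.d.\ exponential spacings from the Poisson process, each exceeding $r$ independently with probability $e^{-\lambda r}$, yielding a geometric cluster size, with the CDF obtained by summing the geometric series. Your extra remark on memorylessness (so that conditioning on where a cluster begins does not bias the forward gaps) is a worthwhile rigor point that the paper omits entirely, but it does not change the route.
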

\IEEEproof
This is a straightforward result in probability theory.
\subsection{Road Length to Bridge a Gap}
\label{roadlengthgapsub}
As aforementioned, when a packet is blocked by a gap of length $x$, it has to wait for help from the opposing traffic. Let $B(x)$ be the distance to the first westbound cluster ahead which is capable of bridging the gap (starting from an arbitrary cluster). Since the MIMO range only associates with the number of potential receivers, a westbound cluster of at least $n_0$ vehicles is needed, and $n_0$ is given as:
\begin{equation}
\label{n_0}
n_0 = \lfloor F^{-1}(x/r-F(N_r)) \rfloor \buildrel \Delta \over =G(N_r,x).
\end{equation}

Fig. \ref{B(x)} depicts a gap of length $x$, and the distance to the available westbound cluster.
\begin{figure}[!ht]
\centering
\subfigure[]{\includegraphics[angle=0,scale=0.37]{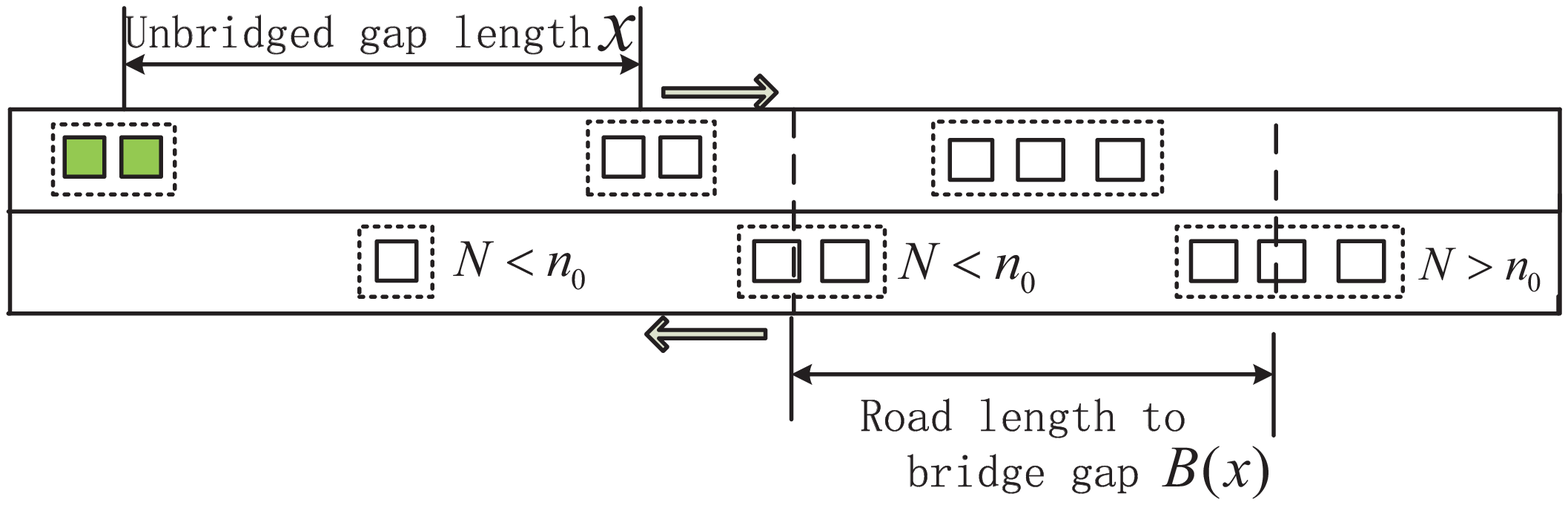}\label{smaller cluters}}
\subfigure[]{\includegraphics[angle=0,scale=0.29]{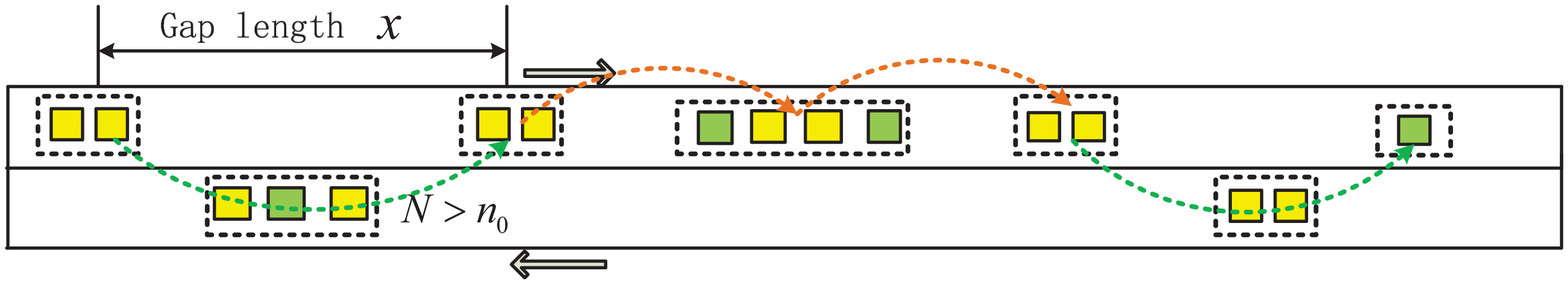}\label{cluster with more than n0 cars}}
\caption{Illustration of the $B(x)$ until a gap $x$ is bridged. (a) Smaller clusters cannot bridged the gap. (b) Until a westbound cluster with more than $n_0$ vehicles is encountered.}
\label{B(x)}
\end{figure}

Denote $\beta(\theta,x)$ the general Laplace transform of $B(x)$, i.e., $\beta(\theta,x)=\textbf{E}(\exp(-\theta B(x)))$.
\begin{lemma}
\label{roadlengthB(x)}
\begin{equation}
\label{theta}
\beta(\theta,x)=\frac{P(N > n_0)}{1-P(N \leq n_0)\frac{\lambda}{\lambda+\theta}},
\end{equation}
\begin{equation}
\label{EBx}
E(B(x)){=}\sum\limits_{k{=}1}^{k_0} \frac{P_N(k)}{\lambda}\left((1{-}e^{-\lambda r})^{{-}G(k,x)}{-}1\right),
\end{equation}
where $G(\cdot,\cdot)$ is given by (\ref{n_0}), and $k_0=\lfloor F^{-1}(x/r)\rfloor$.
\end{lemma}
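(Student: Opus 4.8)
The plan is to obtain $B(x)$ from a \emph{renewal} (first-step) analysis that exploits two structural facts already available: the memorylessness of the Poisson vehicle process and the i.i.d.\ geometric cluster-size law of Lemma~\ref{vehiclenumberdistri}. Fix the required bridging size $n_0$. Scanning forward from the starting cluster, one meets a succession of westbound clusters; by Lemma~\ref{vehiclenumberdistri} each independently contains more than $n_0$ vehicles---and can therefore bridge the gap---with probability $P(N>n_0)=(1-e^{-\lambda r})^{n_0}$, and is too small with the complementary probability $P(N\le n_0)$. Because the vehicle process is Poisson, rejecting a cluster leaves a search that is an independent statistical copy of the original, displaced forward by one inter-cluster increment of length $\mathrm{Exp}(\lambda)$, whose Laplace transform is $\lambda/(\lambda+\theta)$.

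First I would write the one-step identity
\begin{equation}
\beta(\theta,x)=P(N>n_0)+P(N\le n_0)\,\frac{\lambda}{\lambda+\theta}\,\beta(\theta,x),
\end{equation}
where the first term is the unit transform of a successful (zero additional distance) step and the second couples a rejection to a fresh copy of the problem; solving this linear relation for $\beta(\theta,x)$ gives \eqref{theta} at once. The conditional mean then follows either from $E(B(x))=-\partial_\theta\beta(\theta,x)\big|_{\theta=0}$ or from the equivalent compound-geometric reading of $\beta$, in which $B(x)$ is a geometric number of rejections (mean $P(N\le n_0)/P(N>n_0)$) of independent $\mathrm{Exp}(\lambda)$ hops (mean $1/\lambda$); inserting the CDF of Lemma~\ref{vehiclenumberdistri} turns this into $\frac1\lambda\big((1-e^{-\lambda r})^{-n_0}-1\big)$.

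The last step removes the conditioning on $n_0$: the required size is $n_0=G(N_r,x)$, determined through the range gain $F(\cdot)$ by the transmitting cluster's own size $N_r$, so I would average the conditional mean over $N_r$ with pmf $P_N(k)$. The sum is truncated at $k_0=\lfloor F^{-1}(x/r)\rfloor$ because a cluster with $k>k_0$ already satisfies $rF(k)\ge x$, spans the gap unaided, and contributes $B(x)=0$; this yields \eqref{EBx}. The step needing the most care is the regeneration claim together with the exponential hop length: the true spacing to the next cluster is its own extent plus a gap conditioned to exceed $r$, so producing the clean $\lambda/(\lambda+\theta)$ factor relies on neglecting cluster extents relative to gaps (or on an explicitly memoryless spacing model). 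Verifying that $n_0$ enters the recursion only through $P(N>n_0)$, and that the regeneration is exact, is the crux; solving the renewal equation and summing the geometric series afterwards is routine.
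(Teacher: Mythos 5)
Your proposal takes essentially the same route as the paper: the one-step renewal identity you write down is exactly the ``renewal theory'' argument the paper invokes for \eqref{theta} (with the inter-cluster Laplace transform $\lambda/(\lambda+\theta)$), and your derivation of the conditional mean via $-\partial_\theta\beta(\theta,x)\big|_{\theta=0}$ followed by averaging over the cluster-size pmf $P_N(k)$ up to $k_0$ matches the paper's proof step for step. Your write-up is in fact more explicit than the paper's---the geometric-rejection interpretation, the justification of the truncation at $k_0$, and the caveat about approximating cluster spacings as $\mathrm{Exp}(\lambda)$ are all left implicit there---so there is no gap to report.
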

\IEEEproof
\eqref{theta} follows from renewal theory where the Laplace transform of inter-cluster distance equals $\frac{\lambda}{\lambda+\theta}$. Hence, the average is
\begin{equation}
\begin{aligned}
E\left(B(x)\right)&{=}\sum\limits_{k{=}1}^{k_0}P(N_r{=}k)E(B(x)|N_r{=}k)   \\
   &{=}\sum\limits_{k{=}1}^{k_0}P_N(k)\left(-\frac{\partial}{\partial\theta}\beta(\theta,x)|_{\theta=0}\right)\\
   &{=}\sum\limits_{k{=}1}^{k_0}\frac {P_N(k)}{\lambda}\left(\left(1{-}e^{-\lambda r}\right)^{{-}G(k,x)}{-}1\right)
\end{aligned}
\end{equation}
\subsection{Unbridged Gap Distribution}
\label{unbrigapsub}
Fig. \ref{unbridgedgap} illustrates the two cases where the gap cannot be bridged. Here, we only consider the former case which accounts for the lower bound of IPS.
\begin{figure}[!ht]
\centering
\subfigure[]{\includegraphics[angle=0,scale=0.37]{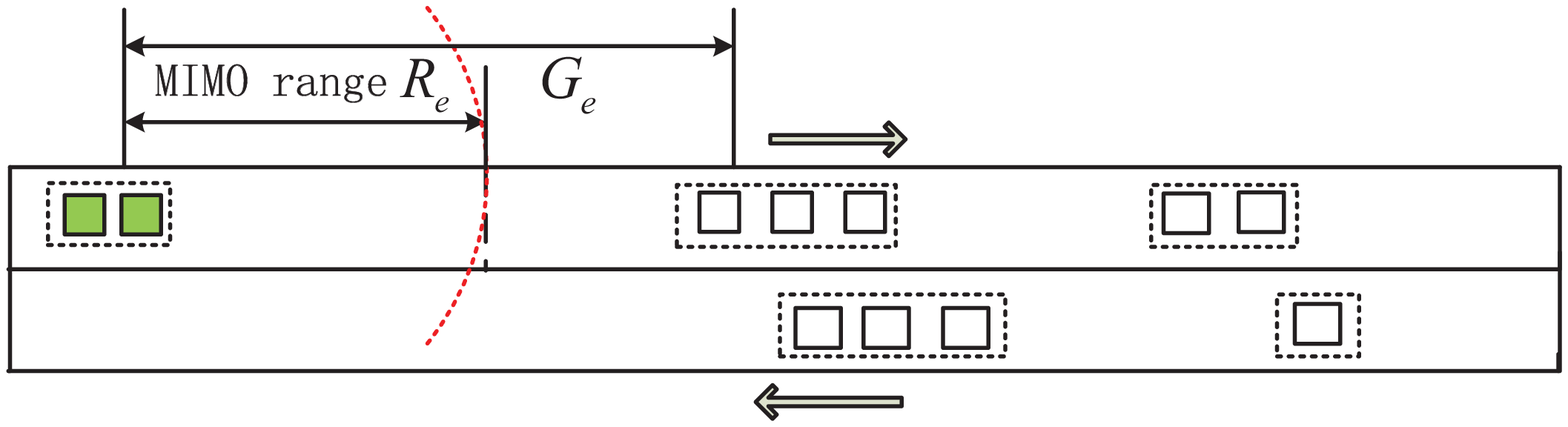}\label{unbridgedcase1}}
\subfigure[]{\includegraphics[angle=0,scale=0.43]{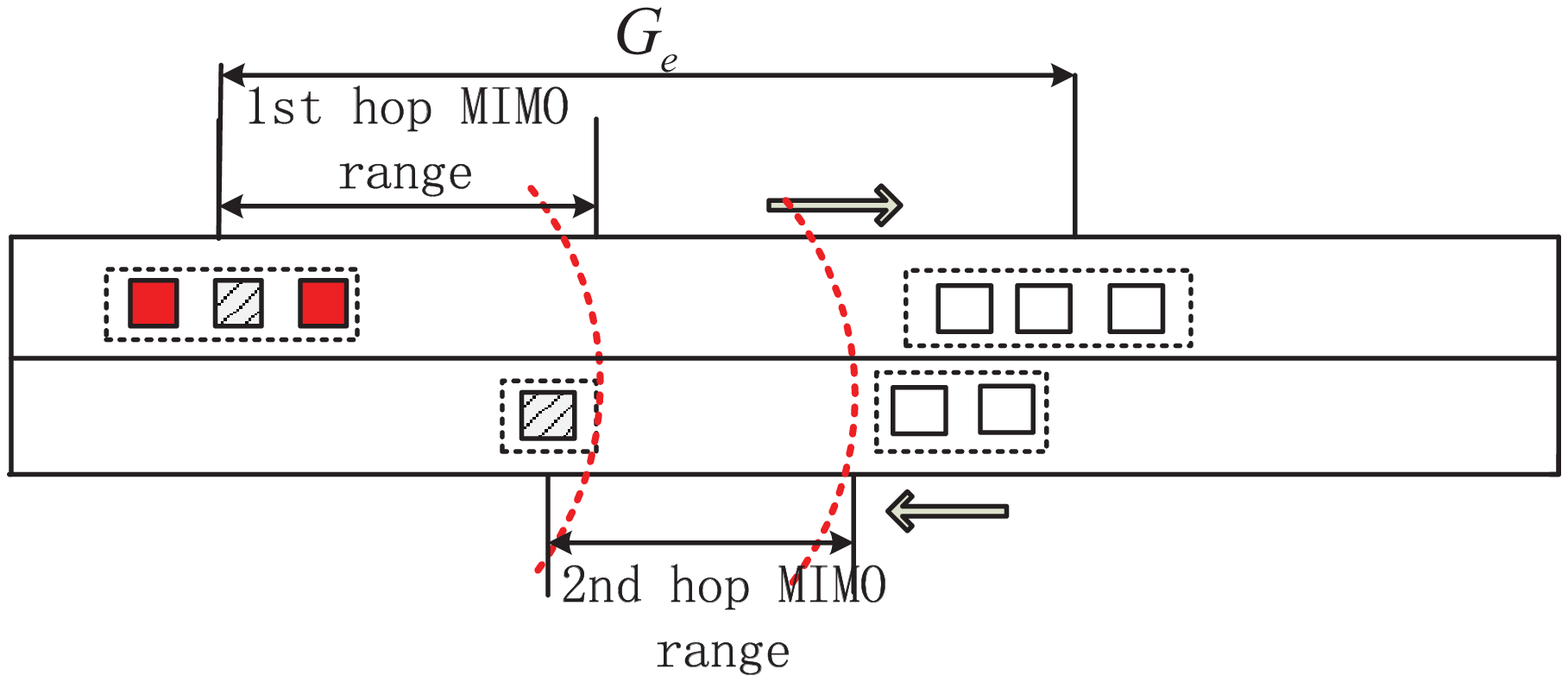}\label{unbridgedcase2}}
\caption{Illustration of an unbridged gap: (a) the next eastbound car is out of MIMO range, and there exists no westbound car within the gap; (b) a westbound
cluster is encountered, yet it is not large enough to aid in bridging.}
\label{unbridgedgap}
\end{figure}
\begin{lemma}
The expected unbridged gap is:
\begin{equation}
\label{EG_e}
\textbf{E}(G_e)=\sum \limits_{k{=}1}^{\infty}\frac {(1{-}e^{{-}\lambda r})^{k{-}1}\int_{rF(k)}^{\infty} x\lambda e^{-\lambda x}dx}{e^{-\lambda r(F(k)-1)}},
\end{equation}
where $F(\cdot)$ is given by (\ref{rangegain}) and (\ref{g}).
\end{lemma}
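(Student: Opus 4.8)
The plan is to compute $\textbf{E}(G_e)$ by conditioning on the size of the relevant eastbound cluster and then invoking the law of total expectation, in the same spirit as the proof of Lemma~\ref{roadlengthB(x)}. First I would recall from Lemma~\ref{vehiclenumberdistri} that the number of vehicles $N$ in a cluster is geometric, $P_N(k)=e^{-\lambda r}(1-e^{-\lambda r})^{k-1}$, and from the range-gain analysis of Section~\ref{rangegainsec} (equations~\eqref{rangegain}--\eqref{g}) that a cluster of $k$ vehicles has MIMO reach $rF(k)$. This fixes the two ingredients that the summand in~\eqref{EG_e} depends on.

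Next I would model the forward gap. Since the eastbound vehicles form a Poisson process of intensity $\lambda$, the distance $X$ from the current cluster to the next eastbound car is exponential with rate $\lambda$, and the current cluster terminated precisely because $X>r$. In the unbridged configuration of Fig.~\ref{unbridgedcase1} the next eastbound car lies beyond the MIMO reach, i.e. $X>rF(k)$, so the contribution of a size-$k$ cluster is the conditional mean
\[
\textbf{E}(G_e\mid N=k)=\textbf{E}(X\mid X>rF(k))=\frac{\int_{rF(k)}^{\infty}x\lambda e^{-\lambda x}\,dx}{e^{-\lambda rF(k)}},
\]
the denominator being $P(X>rF(k))$.

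The remainder is algebra. Weighting by $P_N(k)$ and summing,
\[
\textbf{E}(G_e)=\sum_{k=1}^{\infty}e^{-\lambda r}(1-e^{-\lambda r})^{k-1}\,\frac{\int_{rF(k)}^{\infty}x\lambda e^{-\lambda x}\,dx}{e^{-\lambda rF(k)}},
\]
and collapsing $e^{-\lambda r}/e^{-\lambda rF(k)}=e^{\lambda r(F(k)-1)}=1/e^{-\lambda r(F(k)-1)}$ turns each term into the printed form~\eqref{EG_e}.

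The step I expect to be the main obstacle is not the algebra but the probabilistic bookkeeping behind $\textbf{E}(G_e\mid N=k)$. Three points need care. First, because the MIMO reach is set by the number of receivers, I must be explicit that $k$ is the size of the next eastbound (receiver) cluster, so that $rF(k)$ is exactly the reach required to cover it, and that this size is independent of the preceding inter-cluster gap $X$ for a Poisson process. Second, the configuration of Fig.~\ref{unbridgedcase1} also demands that no westbound car sit inside the gap; since this is the lower-bound regime that tracks only the eastbound distance, I must argue that restricting to that case amounts to conditioning the gap on $X>rF(k)$, with the westbound-absence event merely excluding the bridgeable scenario rather than reshaping the gap-length law. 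Third, I would check that the prior conditioning $X>r$ (the cluster having ended) is harmless: since $rF(k)>r$ and the exponential is memoryless, $\textbf{E}(X\mid X>rF(k),X>r)=\textbf{E}(X\mid X>rF(k))$, so the $e^{-\lambda r}$ carried by $P_N(k)$ cancels exactly against the normalizer and leaves no residual conditioning in~\eqref{EG_e}.
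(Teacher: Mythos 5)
Your proposal is correct and follows essentially the same route as the paper's (one-sentence) proof: condition on the receiver-cluster size $k$ (geometric, by Lemma~\ref{vehiclenumberdistri}), use the exponential law of the inter-cluster gap conditioned on exceeding the MIMO reach $rF(k)$, and average; your algebraic collapse of $e^{-\lambda r}/e^{-\lambda rF(k)}$ into $1/e^{-\lambda r(F(k)-1)}$ reproduces \eqref{EG_e} exactly. The additional bookkeeping you flag (memorylessness absorbing the $X>r$ conditioning, and the westbound-absence event not reshaping the gap law in this lower-bound regime) is a sound elaboration of what the paper leaves implicit.
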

\begin{lemma}
The $\emph{pdf}$ of $G_e$ is:
\begin{equation}
\label{pex}
p_e(x)=\sum \limits_{k=1}^{\infty}{P(N=k)p_e\left(x|x>rF(k)\right)},
\end{equation}
\begin{eqnarray}
p_e\left(x|x>rF(k)\right)=\left\{
\begin{aligned}
& \frac{\lambda e^{-\lambda x}}{e^{-\lambda rF(k)}}, &\quad x\geq rF(k)  \\
& 0,  &\quad x< rF(k)\\
\end{aligned}
\right.
\end{eqnarray}
\end{lemma}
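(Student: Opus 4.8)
The plan is to obtain the density of $G_e$ by conditioning on the cluster size $N$, exploiting the memoryless property of the Poisson point process to identify the conditional gap law, and then removing the conditioning through the law of total probability. The key structural fact is that, since eastbound vehicles form a Poisson process of intensity $\lambda$, the distance from the head of a cluster to the next eastbound vehicle is exponentially distributed with rate $\lambda$. By the range-gain result \eqref{g}, a cluster of size $k$ attains MIMO range $rF(k)$, so its forward gap stays unbridged in the sense of case (a) of Fig.~\ref{unbridgedgap} precisely when that inter-vehicle distance exceeds $rF(k)$.

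First I would fix $N=k$ and compute the law of the gap conditioned on being unbridged. Writing $X$ for the gap, $X\sim\mathrm{Exp}(\lambda)$ and the unbridging event is $\{X>rF(k)\}$, which has probability $e^{-\lambda rF(k)}$; the memoryless property then yields the truncated-exponential conditional density
\[
p_e\!\left(x\mid x>rF(k)\right)=\frac{\lambda e^{-\lambda x}}{e^{-\lambda rF(k)}},\qquad x\geq rF(k),
\]
and zero for $x<rF(k)$, which is the second displayed equation of the lemma. I would note that the same expression arises whether one starts from the raw $\mathrm{Exp}(\lambda)$ spacing or from the inter-cluster gap already conditioned to exceed the single-vehicle range $r$, because $F(k)\geq 1$ forces $rF(k)\geq r$ and nested conditioning of an exponential remains memoryless.

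Next I would uncondition on $k$. By Lemma~\ref{vehiclenumberdistri} the cluster size obeys $P(N=k)=e^{-\lambda r}(1-e^{-\lambda r})^{k-1}$, and the law of total probability delivers the mixture
\[
p_e(x)=\sum_{k=1}^{\infty}P(N=k)\,p_e\!\left(x\mid x>rF(k)\right),
\]
which is the first displayed equation. I would then check that $p_e$ is a genuine density: each conditional density integrates to one over $[rF(k),\infty)$, so $\int_0^\infty p_e(x)\,dx=\sum_k P(N=k)=1$. As an independent consistency check I would form $\int_0^\infty x\,p_e(x)\,dx$ and recover, term by term, the expression for $\textbf{E}(G_e)$ in \eqref{EG_e}, after rewriting its denominator as $e^{-\lambda r(F(k)-1)}=e^{\lambda r}e^{-\lambda rF(k)}$; this confirms that the proposed pdf is exactly the one underlying that expectation.

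The main obstacle is justifying the mixing weights. Using the marginal $P(N=k)$ rather than the blocking-conditioned weight $P(N=k\mid\text{unbridged})$ is legitimate only under the convention that $G_e$ is \emph{defined} by conditioning the gap on being unbridged within each cluster size and then averaging against the marginal size law; since the unconditional blocking probability $e^{-\lambda r(F(k)-1)}$ genuinely depends on $k$, I would state this convention explicitly rather than leave it implicit. I would likewise make clear that only case (a) is retained, with no westbound vehicle present inside the gap, which is exactly the simplification that produces the claimed IPS lower bound.
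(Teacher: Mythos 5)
Your proof is correct and follows essentially the same route as the paper, whose one-line justification simply invokes the exponential property of inter-cluster distance and the fact that the MIMO range is determined by the (geometrically distributed) cluster size --- exactly the truncated-exponential-plus-mixture argument you spell out. Your normalization check, your term-by-term recovery of \eqref{EG_e}, and your caveat that the marginal weights $P(N=k)$ (rather than blocking-conditioned weights) are only valid under an explicit definitional convention all go beyond the paper's terse proof, but they elaborate rather than alter the same underlying argument.
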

\IEEEproof
\eqref{EG_e} and \eqref{pex} come from the exponential property of inter-cluster distance and the fact that MIMO range follows the same distribution as $N_r$.
\subsection{Distribution of Blocking Time $T_w$}
\label{subsecT}
\begin{lemma}
\label{Twlemma}
Let $T_w$ denote the blocking time until connectivity provided by the opposing traffic is available. The expected value of $T_w$ satisfies:
\begin{equation}
2v\textbf{E}(T_w)=\textbf{E}(G_e)+\frac {1}{\lambda}+\int_{0}^{\infty}{\textbf{E}(B(x))p_e(x)}dx.
\end{equation}
\end{lemma}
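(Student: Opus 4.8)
The plan is to convert the blocking time into a geometric catch-up distance and then decompose that distance into three additive pieces whose expectations are already supplied by the preceding lemmas. Since the eastbound cluster that holds the packet and the westbound relay traffic travel toward each other, each at speed $v$, their relative closing speed is $2v$. I would therefore first argue that $T_w$ equals the total relative distance $D$ that the opposing stream must sweep past the stalled eastbound cluster before a sufficiently large westbound cluster slides into a bridging position, divided by $2v$; that is, $2v\,\mathbf{E}(T_w)=\mathbf{E}(D)$. This reduces the claim to evaluating $\mathbf{E}(D)$.

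Next I would decompose $D$ along the road at the instant of blocking, working in the reference frame of the eastbound cluster and restricting to the former unbridged case of Fig.~\ref{unbridgedgap} (no westbound car inside the gap), which is the regime underlying the lower bound. The three contributions are: (i) the unbridged gap $G_e$ itself, which the bridging westbound cluster must traverse, contributing $\mathbf{E}(G_e)$ as given in \eqref{EG_e}; (ii) the offset from the far edge of the gap to the first westbound vehicle, which by the exponential inter-vehicle spacing of the Poisson stream has mean $\frac{1}{\lambda}$; and (iii) the further distance $B(x)$ required to reach the first westbound cluster that is actually large enough to bridge a gap of length $x$, whose expectation is characterized in Lemma~\ref{roadlengthB(x)} through \eqref{EBx}.

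Because the capable-cluster search distance $B$ depends on the realized gap length, I would combine part (iii) with the gap distribution by the law of total expectation, using the pdf $p_e(x)$ from \eqref{pex}: the expected contribution of the search term is $\int_0^\infty \mathbf{E}(B(x))\,p_e(x)\,dx$. Summing the three expected contributions and invoking linearity of expectation yields $\mathbf{E}(D)=\mathbf{E}(G_e)+\frac{1}{\lambda}+\int_0^\infty \mathbf{E}(B(x))\,p_e(x)\,dx$, which after the relative-speed conversion gives the stated identity.

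The main obstacle I expect is justifying the additive decomposition of $D$, specifically that the three segments lie end to end without overlap and that their expectations combine cleanly. The delicate point is the conditioning structure: $B(x)$ is defined conditionally on the gap length $x$, so I must verify that conditioning on $G_e=x$ through $p_e(x)$ is consistent with the renewal argument that produced $\mathbf{E}(B(x))$ in Lemma~\ref{roadlengthB(x)}, and that the nearest-westbound offset contributing $\frac{1}{\lambda}$ is not already absorbed into $B(x)$. Confirming that the restriction to the empty-gap case keeps these pieces disjoint is precisely what makes the concatenation, and hence the final identity, valid.
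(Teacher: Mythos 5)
Your proposal is correct and follows essentially the same route as the paper's proof: the paper likewise writes $2vT_w = G_e + I_w + B(x)$, where $G_e + I_w$ is the distance to the first westbound cluster (your pieces (i) and (ii), with $\mathbf{E}(I_w)=\frac{1}{\lambda}$) and $B(x)$ is the search distance to a capable cluster, then takes expectations and averages over the gap length $x$ via $p_e(x)$. Your additional care about the conditioning structure and the disjointness of the segments only makes explicit what the paper treats as immediate.
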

\begin{proof}
 As depicted in Fig. \ref{T}, the total distance to be traversed by two opposing traffic until the gap is bridged, equals the distance to the first westbound cluster $G_e+I_w$ plus the road length to bridge the
 gap $B(x)$ (starting from an arbitrary cluster). Apparently, $2vT_w=G_e+I_w+B(x)$. The proof is completed by taking the expectations, and averaging on all possible gap length $x$.
 \begin{figure}[!ht]
\centering
\includegraphics[angle=0,scale=0.37]{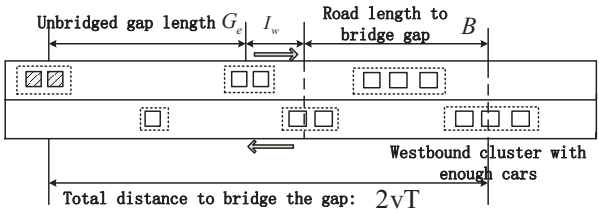}
\caption{Blocking Time $T$: the total distance to bridged the gap is $G_e+I_w+B(x)$}
\label{T}
\end{figure}
\end{proof}
\subsection{Distance $D$ Traversed After the Blocking}
\label{Dsubsec}
Let $C_e$ denote the relative eastbound distance traversed beyond the first gap, until the next block is encountered. As shown in Fig. \ref{D}, we have
\begin{equation}
\label{E(D)}
\textbf{E}(D)=\textbf{E}(G_e)+\textbf{E}(C_e).
\end{equation}
\begin{lemma}
\label{Dlemma}
The average value of $D$ is given by:
\begin{equation}
\label{E(D)lemma}
\textbf{E}(D)=\frac{1}{\lambda(1-P_b)}
\end{equation}
where $P_b$ indicates the probability that a gap can be bridged, and is given as:
\begin{equation}
\label{Pb}
\begin{aligned}
P_b&{=}\frac{1}{2}\sum\limits_{n_e{=}1}^{\infty}{\left(1-e^{-\lambda rF(n_e)}\right)P_N(n_e)} \\
   &{+}\frac{1}{2}\sum\limits_{n_w{=}1}^{\infty}\sum\limits_{n_e{=}1}^{\infty}{\left(1{-}e^{-\lambda r\left(F(n_w){+}F(n_e)\right)}\right)P_N(n_w)P_N(n_e)}.
\end{aligned}
\end{equation}
\end{lemma}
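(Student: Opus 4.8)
The plan is to identify the sequence of blocks along the eastbound line as a renewal process and $\textbf{E}(D)$ as its mean inter-renewal distance. Since the eastbound vehicles form a Poisson process of intensity $\lambda$, inter-vehicle gaps are met at spatial rate $\lambda$, and at each gap the packet either hops across (the gap is bridgeable) or stalls (a block). If a gap is a block independently with probability $1-P_b$, then the blocks are a Bernoulli thinning of a rate-$\lambda$ process, so the mean distance between consecutive blocks, which is exactly $\textbf{E}(D)$, is $\frac{1}{\lambda(1-P_b)}$. This reduces the lemma to showing that the per-gap bridging probability equals the $P_b$ of \eqref{Pb}.

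Next I would compute that bridging probability by conditioning. A gap of length $\ell$ is bridgeable precisely when $\ell$ does not exceed the reachable range, which by Section \ref{rangegainsec} is $rF(n_e)$ when the eastbound cluster of size $n_e$ must reach the next eastbound cluster unaided, and $r\left(F(n_w)+F(n_e)\right)$ when an intervening westbound cluster of size $n_w$ relays. Treating these two configurations as equally likely (weight $\tfrac{1}{2}$ each, from the symmetric opposing traffic), weighting cluster sizes by the pmf $P_N$ of Lemma \ref{vehiclenumberdistri}, and using $P(\mathrm{gap}>\ell)=e^{-\lambda\ell}$ for the exponential spacing, the block probability becomes
\begin{equation}
1-P_b=\tfrac{1}{2}\sum_{n_e\geq1}e^{-\lambda rF(n_e)}P_N(n_e)+\tfrac{1}{2}\sum_{n_w,n_e\geq1}e^{-\lambda r\left(F(n_w)+F(n_e)\right)}P_N(n_w)P_N(n_e).
\end{equation}
Subtracting this from $1=\tfrac{1}{2}\sum_{n_e}P_N(n_e)+\tfrac{1}{2}\sum_{n_w,n_e}P_N(n_w)P_N(n_e)$ then yields exactly \eqref{Pb}.

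As an independent check I would also reach $\textbf{E}(D)$ through the stated decomposition $\textbf{E}(D)=\textbf{E}(G_e)+\textbf{E}(C_e)$: the packet clears a geometric number of gaps, each bridged with probability $P_b$, before the next block, so $\textbf{E}(C_e)$ is a geometric sum of bridged-segment lengths; adding the known $\textbf{E}(G_e)$ of \eqref{EG_e} should collapse to the same $\frac{1}{\lambda(1-P_b)}$.

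I expect the delicate part to be making the thinning exact rather than approximate. Consecutive gaps share cluster structure, so their block events are not truly independent, and the cluster size seen at a block is size-biased relative to $P_N$; arguing that the plain $P_N$ together with independent per-gap bridging nonetheless gives the correct effective block rate $\lambda(1-P_b)$ is the crux. Justifying the $\tfrac{1}{2}$ split between the unaided and westbound-assisted configurations from the opposing-traffic geometry is the remaining point to nail down.
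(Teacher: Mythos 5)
Your argument is correct in substance and lands on the right formula, but your primary route is genuinely different from the paper's. The paper never invokes thinning or Wald-type reasoning; it works from the decomposition $\textbf{E}(D)=\textbf{E}(G_e)+\textbf{E}(C_e)$ in \eqref{E(D)} and closes the system with two identities: the law of total expectation on an unconditioned gap, $(1-P_b)\textbf{E}(G_e)+P_b\textbf{E}(\bar{G}_e)=\frac{1}{\lambda}$ (with $\bar{G}_e$ the length of a bridged gap), and the one-step recursion $\textbf{E}(C_e)=P_b\left(\textbf{E}(\bar{G}_e)+\textbf{E}(C_e)\right)$; solving these two linear equations yields \eqref{E(D)lemma}. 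Your thinning/Wald argument is the probabilistic counterpart of exactly that algebra, and it buys transparency: it explains why the answer is (mean number of gaps per renewal cycle) $\times$ (unconditional mean gap $1/\lambda$), i.e., why the length bias between bridged and unbridged gaps cancels. The paper's version buys the same conclusion without appealing to thinning, by letting $\textbf{E}(\bar{G}_e)$ absorb the bias and then eliminating it. One correction to your ``independent check'': the explicit formula \eqref{EG_e} for $\textbf{E}(G_e)$ is neither needed nor sufficient there. Summing the geometric number of bridged gaps gives $\textbf{E}(D)=\textbf{E}(G_e)+\frac{P_b}{1-P_b}\textbf{E}(\bar{G}_e)$, and the collapse to $\frac{1}{\lambda(1-P_b)}$ comes from the total-expectation identity above, which eliminates $\textbf{E}(\bar{G}_e)$; plugging in \eqref{EG_e} alone gets you nowhere. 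Finally, the delicate points you flag --- independence of bridging events across consecutive gaps, size-biasing of cluster sizes, and justifying the $\tfrac{1}{2}$ weighting of the unaided versus westbound-assisted configurations in \eqref{Pb} --- are not resolved by the paper either: its entire justification of \eqref{Pb} is that it ``comes from the exponential nature of a gap,'' and the same independence approximations are implicit in its two identities. So your scruples are warranted, but they mark approximations shared with the paper, not a defect peculiar to your approach.
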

\begin{figure}[!h]
\centering
\includegraphics[angle=0,scale=0.3]{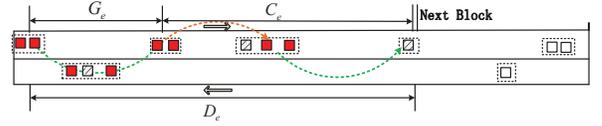}
\caption{Total distance traversed when a bridge is created $D=G_e+C_e$}
\label{D}
\end{figure}
\IEEEproof
identity \eqref{Pb} comes from the exponential nature of a gap, i.e., inter-cluster distance.

Let $\bar{G}_e$ be the length of a bridged gap, we can obtain
\begin{equation}
\label{unconditional}
(1-P_b)\textbf{E}(G_e)+P_b\textbf{E}(\bar{G}_e)=\frac{1}{\lambda},
\end{equation}
\begin{equation}
\label{C_e}
\textbf{E}(C_e)=P_b\left(\textbf{E}(\bar{G}_e)+\textbf{E}(C_e)\right).
\end{equation}

We complete the proof by substituting \eqref{Pb}-\eqref{C_e} into \eqref{E(D)}.
\subsection{The IPS derivation}
\label{IPSsubsec}
As discussed in Section \ref{systemmodel}, the entire information propagation process can be abstracted to a renewal reward process where each cycle is comprised of a \emph{waiting} state and a \emph{forwarding} state. Let sequences ${T_i}$ and ${D_i}$ denote the duration of waiting state and the distance traversed in forwarding state, respectively. The tuples in the sequence $(T_i,D_i)$ are independent and identically distributed, thanks to the Poisson nature of vehicle traffic. Thus, the average referential information propagation speed $v_p$ (with respect to the eastbound cars) is defined as:
\begin{equation}
\label{Vp}
v_p=\frac{\textbf{E}(D)}{\textbf{E}(T)}.
\end{equation}
Note that $T_i$ includes the blocking time $T_{wi}$ and the transmission time $T_{ti}$, i.e.,
\begin{equation}
\label{ET}
\textbf{E}(T)=\textbf{E}(T_w)+\textbf{E}(T_t)
\end{equation}

For $T_t$, it is reasonable to assume that transmission via either a one-hop or two-hop MIMO transmission approximately occurs with the same probability, after a sufficient long observation. Denote by $P_o$ the outage probability of a single hop, we have
\begin{equation}
\begin{aligned}
\label{ETt}
\textbf{E}(T_t)&\approx \frac{3}{2}\textbf{E}\left(\frac{\tau}{1-P_o}\right) \\
               &\approx \frac{3\tau}{2}\frac{1}{1-\textbf{E}(P_o)} \\
               & =\frac{3\tau}{2}\frac{1}{1{-}\sum \limits_{k{=}1}^{\infty}P_N(k)\textbf{E}\left(P_o|_{N_r{=}k}\right)},
\end{aligned}
\end{equation}
where the expression of $P_o$ is the same as \eqref{Pout}, except that $R$ should be replaced with a random exponential variable $I$, and the expectation is taken over $I$. Also note the second approximation is due to the fact that $P_o$ is usually small enough so that $\textbf{E}(1-P_o) \approx \textbf{E}(\frac{1}{1-P_o})$ with $1-P_o$ approaches 1.

\begin{theorem}
The IPS is given by:
\begin{equation}
v_p=\frac{\textbf{E}(D)}{\textbf{E}(T_w)+\textbf{E}(T_t)}
\end{equation}
where $\textbf{E}(D)$ is defined by Lemma \ref{Dlemma}, $\textbf{E}(T_t)$ is given by \eqref{ETt} and $\textbf{E}(T_w)$ is defined by Lemma \ref{Twlemma}.
\end{theorem}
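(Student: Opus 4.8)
The plan is to read this statement as the assembly step of a renewal-reward argument: nearly every ingredient has already been established, so the task is to justify the renewal-reward structure, apply the renewal-reward theorem to get \eqref{Vp}, and then substitute the component expectations from the earlier lemmas.

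First I would formalize the regeneration structure that was asserted just before \eqref{Vp}. Because the eastbound and westbound vehicles each form a spatial Poisson process of intensity $\lambda$ and move at constant speed $v$, the inter-cluster gaps are i.i.d. exponential. By the memorylessness of the exponential gaps, each time a packet is blocked the geometry seen ahead (the residual gap, the cluster it sits in, and the positions of the relevant opposing clusters) is statistically identical to a fresh start. This is exactly the \emph{regeneration} property that makes the consecutive cycle tuples $(T_i,D_i)$ i.i.d. Each cycle consists of a \emph{waiting} phase (the packet is buffered while the opposing traffic bridges the gap) followed by a \emph{forwarding} phase (the packet hops eastward until the next block), so the process is a genuine renewal-reward process with reward $D_i$ accrued over cycle length $T_i$.

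Next I would invoke the elementary renewal-reward theorem: the long-run average displacement per unit time equals the ratio of the per-cycle expectations, $v_p=\textbf{E}(D)/\textbf{E}(T)$, which is precisely \eqref{Vp}. I would then decompose the cycle duration. Since each $T_i$ splits into the blocking time $T_{wi}$ and the transmission time $T_{ti}$, taking expectations over a cycle gives $\textbf{E}(T)=\textbf{E}(T_w)+\textbf{E}(T_t)$, i.e. \eqref{ET}. Substituting this denominator into \eqref{Vp} yields the claimed formula. The proof then closes by citing the already-derived closed forms: $\textbf{E}(D)$ from Lemma \ref{Dlemma}, $\textbf{E}(T_w)$ from Lemma \ref{Twlemma}, and $\textbf{E}(T_t)$ from \eqref{ETt}.

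The main obstacle is the rigorous justification of the i.i.d. assumption on the cycles. The spatial memorylessness handles the gap lengths cleanly, but one must argue that the \emph{joint} state — residual eastbound cluster, the bridging westbound cluster, and the ongoing forwarding configuration — regenerates independently across cycles, given that both traffic streams are in continuous motion and that the waiting and forwarding phases are coupled through the same gap $x$. I would address this by conditioning on the gap length and using the stationary-independent-increments property of the two Poisson streams to decouple successive cycles. A secondary caveat worth flagging is that the expression is really an approximation/lower bound rather than an exact identity, because \eqref{ETt} rests on the one-hop/two-hop equiprobability assumption and on $\textbf{E}(1-P_o)\approx\textbf{E}(1/(1-P_o))$ for small outage, and because (as noted in Section \ref{unbrigapsub}) only the former unbridged case is retained; these approximations should be stated explicitly so the theorem is understood in that sense.
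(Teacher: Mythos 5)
Your proposal follows essentially the same route as the paper: the paper presents this theorem as the assembly of the renewal-reward definition \eqref{Vp} (justified by the same i.i.d.\ cycle tuples $(T_i,D_i)$ argument you formalize) with the decomposition \eqref{ET}, then substitutes Lemma \ref{Dlemma}, Lemma \ref{Twlemma} and \eqref{ETt}. Your extra care in justifying the regeneration property via memorylessness of the Poisson gaps, and your flagging that \eqref{ETt} makes the result an approximation, go slightly beyond the paper's terse treatment but do not constitute a different approach.
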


Table \ref{IPStable} gives an overview of our main analytical results.
\begin{table}[!h]
\caption{Summary of main analytical results}\label{IPStable}
  \renewcommand{\arraystretch}{1.5}
  \centering
  \begin{tabular}{|c|c|c|}
  \hline
  \multicolumn{3}{|c|}{IPS $v_p{:} Eq. \eqref{Vp}$} \\
  \hline
  Distance ${\bf{E}}(D){:}$             & \multicolumn{2}{c|}{Total time ${\bf{E}}(T){:} Eq. \eqref{ET}$} \\ \cline{2-3}
  $Eq. \eqref{E(D)lemma}$  & Waiting time ${\bf{E}}(T_w){:}$   &  Transmission time ${\bf{E}}(T_t){:}$   \\
  $Eq. \eqref{Pb}$         & $Eq. \eqref{B(x)},\eqref{pex},\eqref{EG_e}$ & $Eq. \eqref{ETt}$ \\ \hline
  \end{tabular}
\end{table}
\section{Simulation Results}
\label{simulation}
In this section, we conduct information propagation experiments to verify the correctness and accuracy of the derived theoretical results. The simulation follow precisely the model described in Section \ref{systemmodel}. We measure the IPS by selecting a sufficient remote source-destination pair, taking the ratio of the propagation distance over the corresponding delay, and averaging over multiple iterations of randomly generated traffic.
\begin{figure*}[!t]
\normalsize
\subfigure[r=20(m)]{\includegraphics[height=1.7in,width=2.4in]{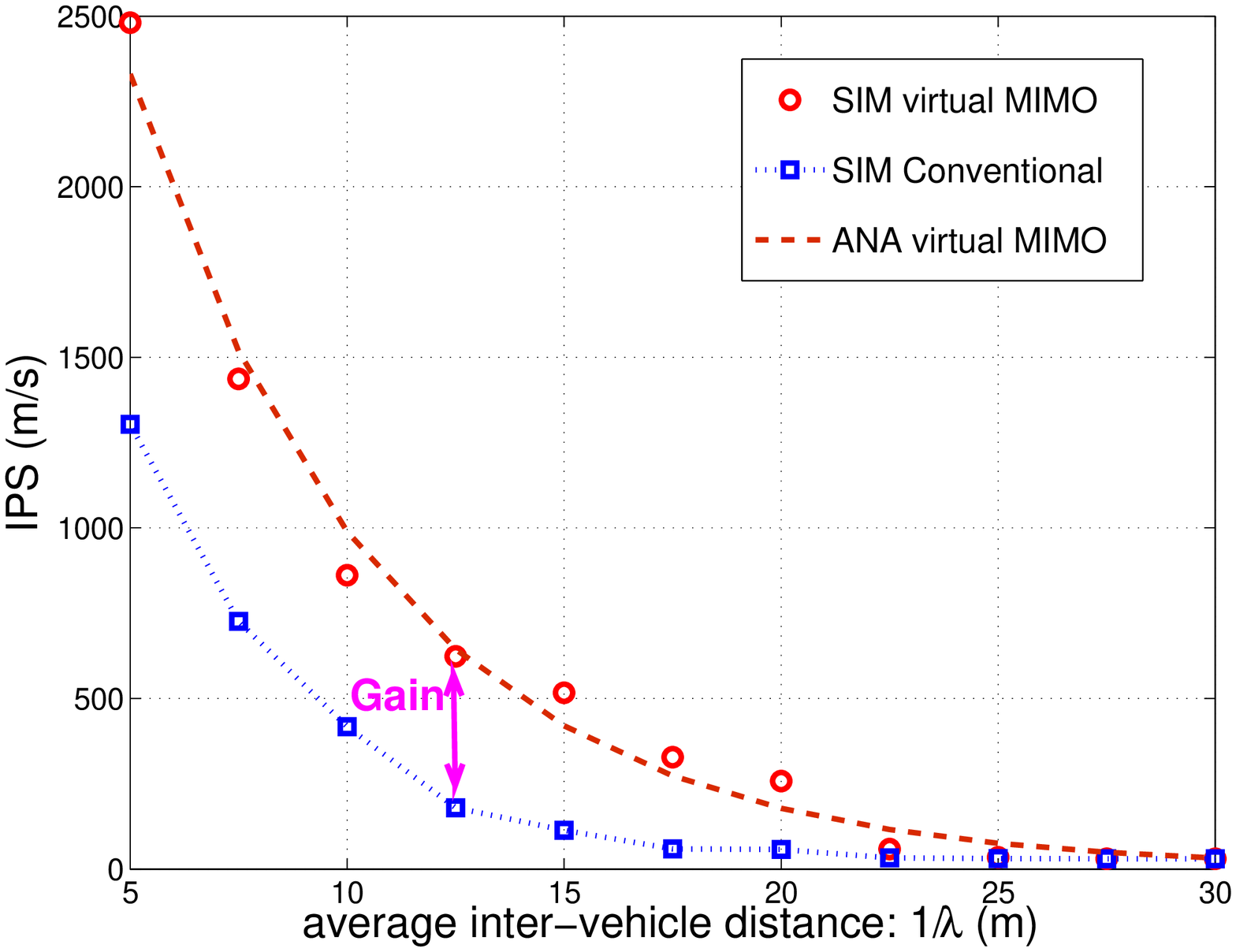}\label{vprop_lambda_r_20}}
\subfigure[r=25(m)]{\includegraphics[height=1.7in,width=2.4in]{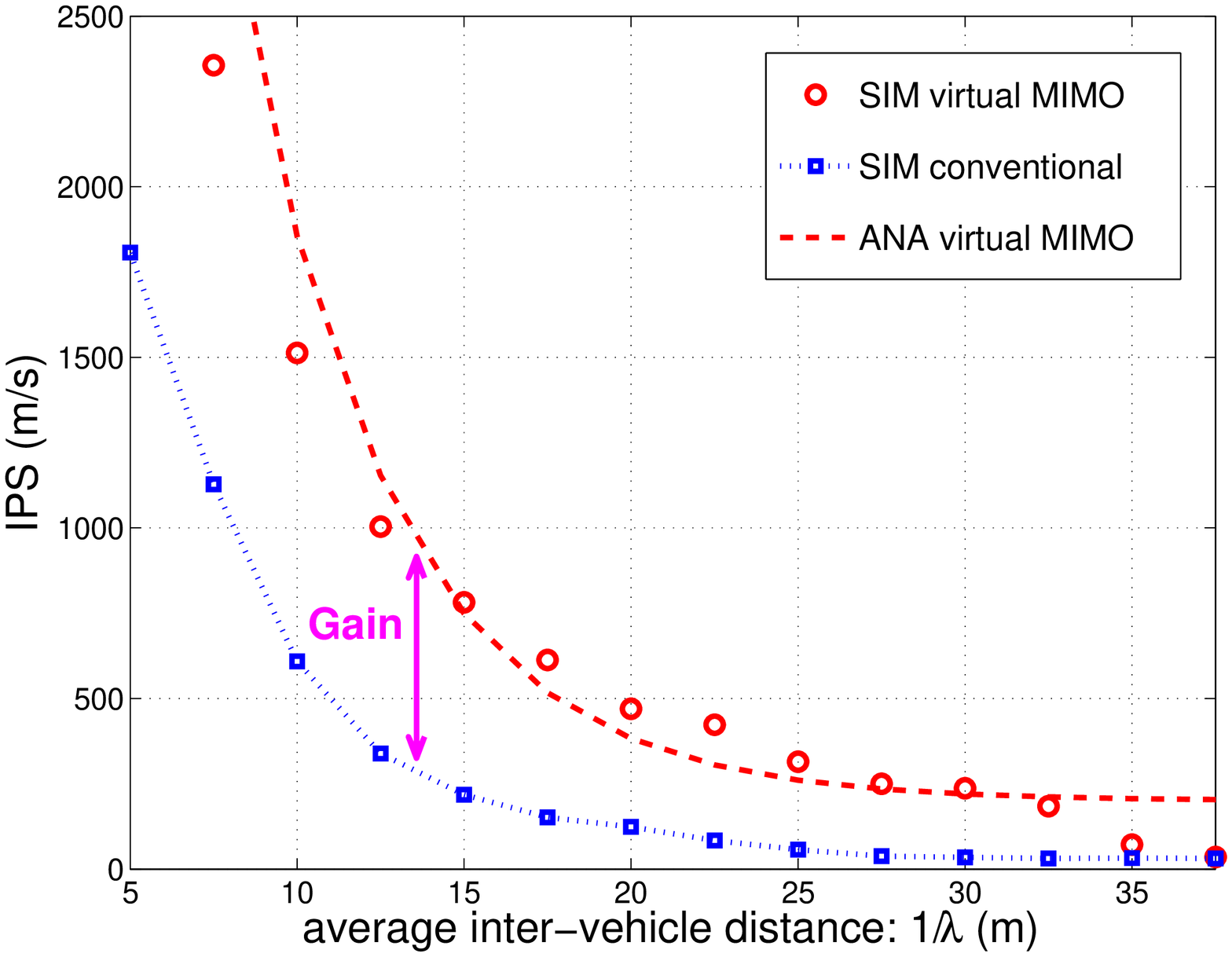}\label{vprop_lambda_r_25}}
\subfigure[r=30(m)]{\includegraphics[height=1.7in,width=2.4in]{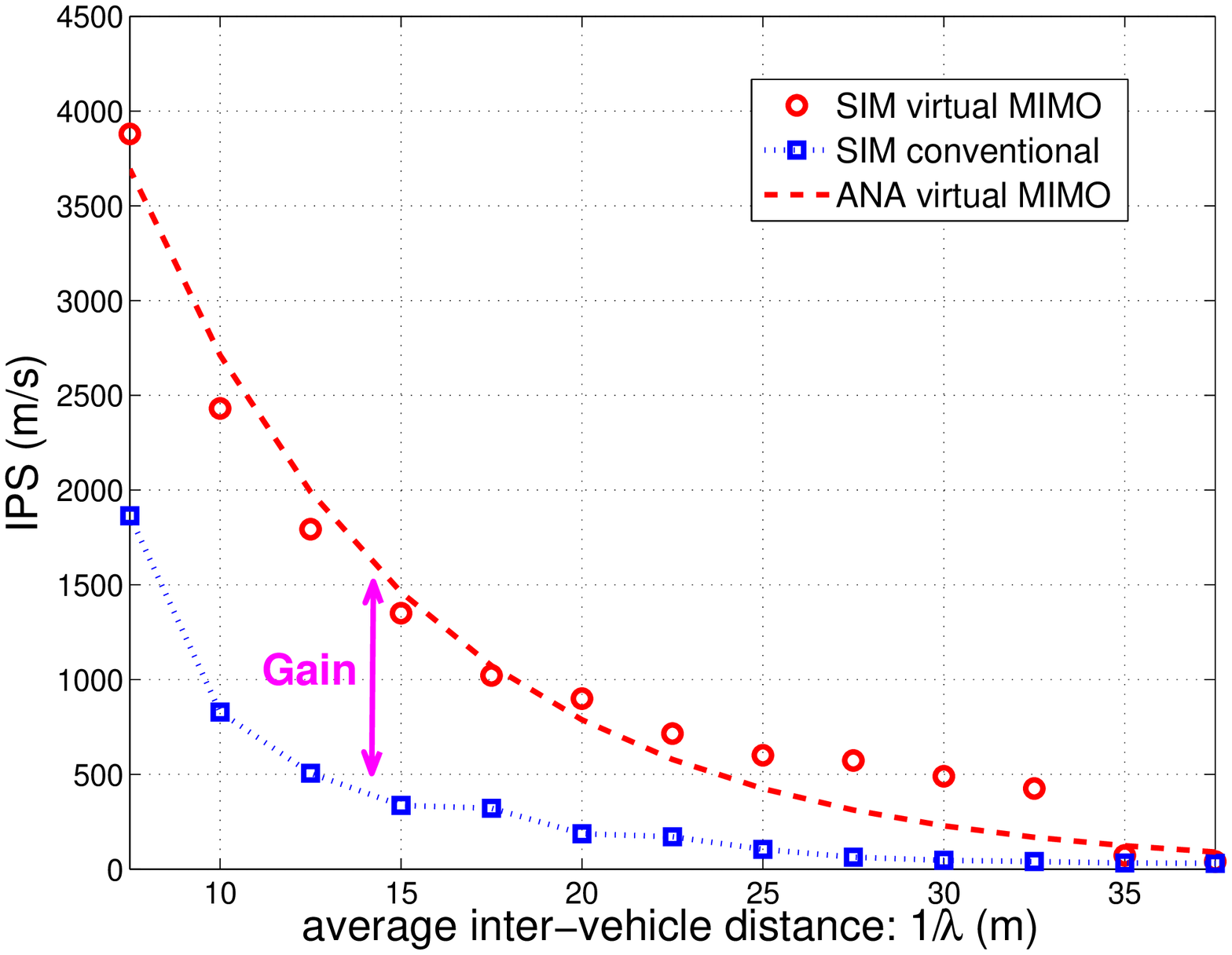}\label{vprop_lambda_r_30}}
\caption{The IPS with different vehicle densities}
\label{IPSvslambda}
\end{figure*}
\begin{figure*}[!t]
\normalsize
\subfigure[r=20(m)]{\includegraphics[height=1.7in,width=2.4in]{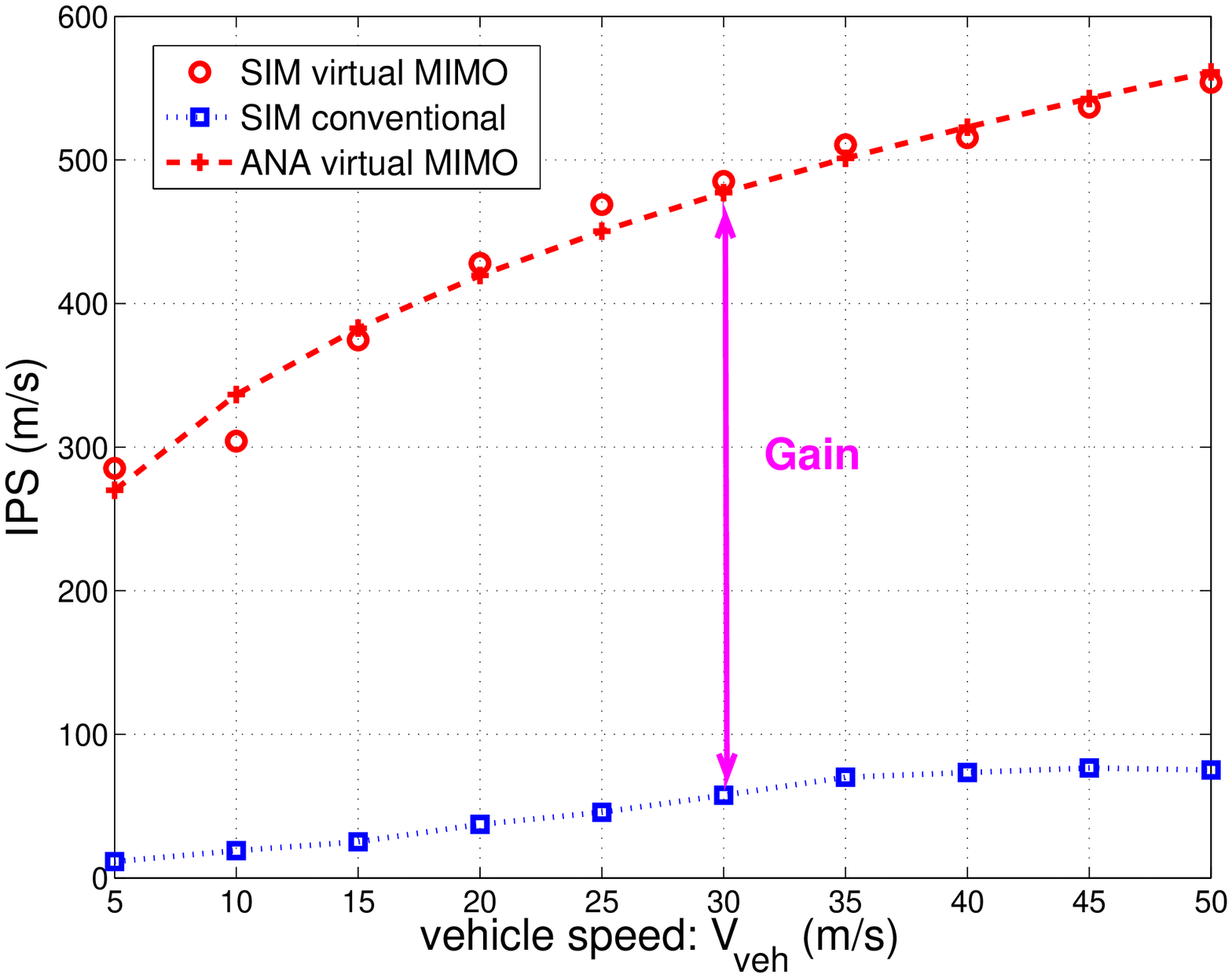}\label{vprop_v_r_20}}
\subfigure[r=25(m)]{\includegraphics[height=1.7in,width=2.4in]{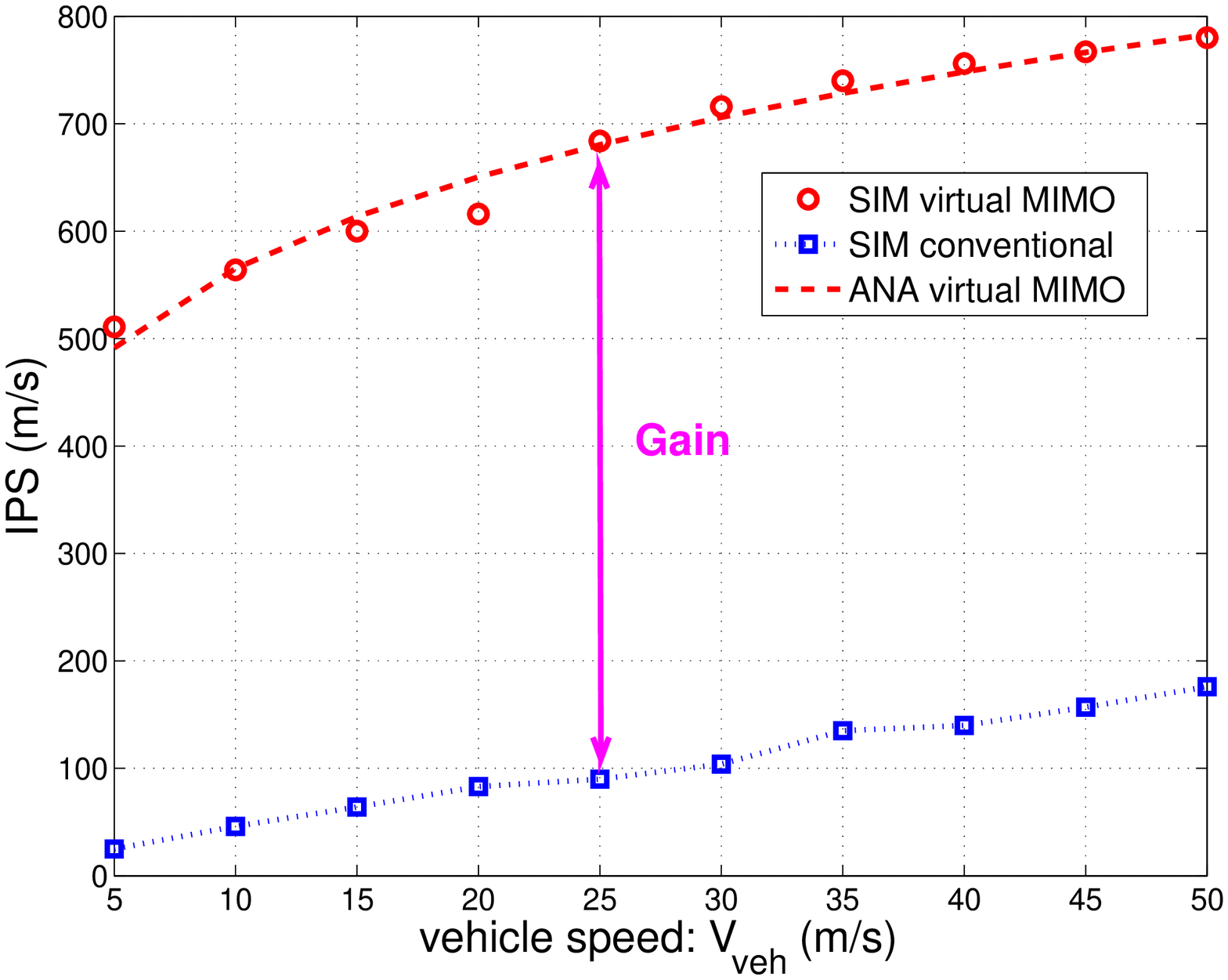}\label{vprop_v_r_25}}
\subfigure[r=30(m)]{\includegraphics[height=1.7in,width=2.4in]{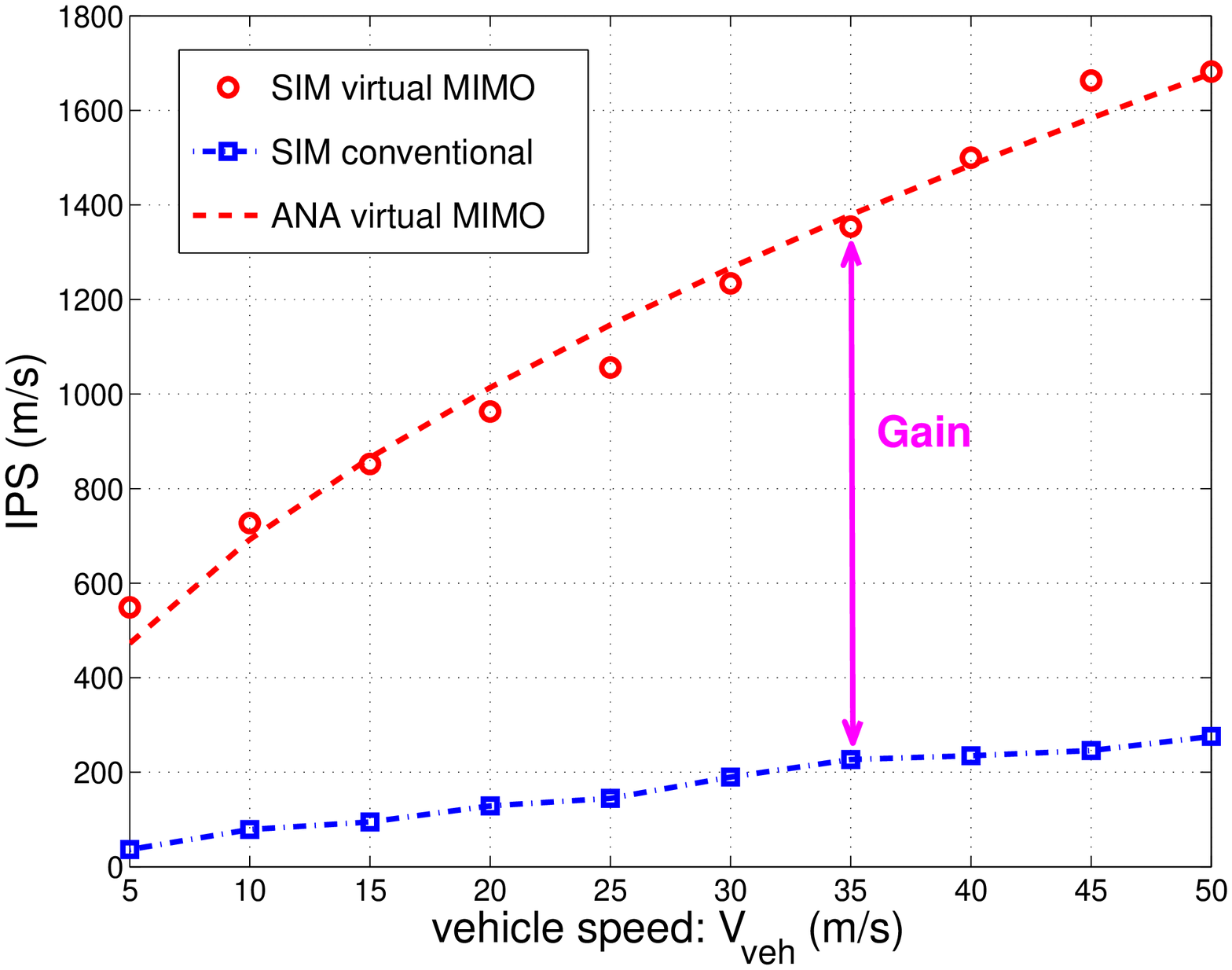}\label{vprop_v_r_30}}
\caption{The IPS with different vehicle speeds}
\label{IPSvsv}
\end{figure*}
\begin{figure}[!h]
\centering
\includegraphics[angle=0,scale=0.35]{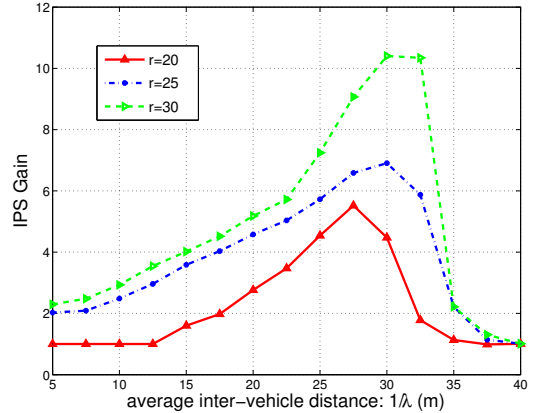}
\caption{The IPS gain over traditional propagation scheme with different vehicle densities}
\label{IPSgainsim}
\end{figure}

Fig. \ref{IPSvslambda} shows the comparisons of IPS between the proposed cooperative forwarding scheme and conventional non-cooperative scheme under different single-vehicle transmission ranges. In each setting, the average inter-vehicle distance varies from 5m to 40m as vehicles move at 30 m/s. We observe that the proposed scheme exhibits significant advantage over the conventional one. For instance, in Fig. \ref{vprop_lambda_r_25}, when the vehicle density is 50 vehicles/km, i.e., the average inter-vehicle distance is 20m, the IPS is raised from 120m/s to about 500m/s. Under the same vehicle density with larger single-vehicle transmission range (see Fig. \ref{vprop_lambda_r_30}), the raise is even more obvious, from 200m/s to about 1000m/s. The reason is that larger $r$ accounts for larger clusters, thus MIMO range is further increased with more cooperating vehicles and vise versa.

Interestingly, performance gain introduced by our scheme is the most remarkable under moderate traffic, whereas too dense or too sparse traffic makes the IPS gain less (see Fig. \ref{IPSgainsim}). Intuitively, in the extreme sparse case, the network is always disconnected and no cooperation opportunity can be exploited, while in the extreme dense case, the network is fully connected so that our cluster-based cooperative forwarding scheme cannot further improve the connectivity of the highway network.

Fig. \ref{IPSvsv} also displays the IPS comparisons for increasing vehicle speed. Again, the proposed scheme is superior. It is also shown that faster traffic actually accelerates information propagation (e.g. a huge IPS enhancement from 100m/s to around 700m/s when the vehicles move at 25m/s in Fig. \ref{vprop_v_r_25}), since increased mobility brings more dynamics to the network topology and creates more opportunistic contacts. Moreover, in both experiments, there is close match between simulation results and theoretical results.

Furthermore, in both experiments, there is close match between simulation results and theoretical results, and IPS also increases with $r$. This is because as $r$ becomes larger, clusters becomes larger, thus MIMO range is further increased.
\section{Conclusion}
\label{conclusion}
In this paper, we study the information propagation speed in the bidirectional highway scenario, based on a novel cross-layer analytical framework. We proposed a cluster-based information forwarding scheme, in which adjacent vehicles form a distributed antenna array to cooperatively and opportunistically transmit signals. It is found that the scheme can effectively boost one-hop transmission range and exhibits significant advantage in IPS. Both closed-form results of the transmission range gain and the improved IPS are derived. Interestingly, it is also shown that the magnitude of IPS gain depends upon traffic density, and increased mobility offers more opportunistic contacts thus aids in information propagation. In our future work, we intend to further generalize the model by engaging all the vehicles within a cluster in cooperation and taking into account different vehicle densities in opposing directions.


\bibliographystyle{IEEEtran}

\end{document}